\crefname{hypothesis}{Hypothesis}{Hypotheses}
\title{Graph Distance from the Topological View of Non-Backtracking Cycles}
\author{Leo Torres\thanks{Network Science Institute, Northeastern University, Boston, MA 
  (\email{leo@leotrs.com}).}
\and Pablo Su\'arez-Serrato\thanks{Department of Mathematics, UC Santa Barbara USA, and Instituto de Matem\'aticas, Universidad Nacional Aut\'onoma de M\'exico, Ciudad de M\'exico, CDMX, (\email{pablo@im.unam.mx}).}
\and Tina Eliassi-Rad\thanks{Network Science Institute \& College of Computer and Information Science, Northeastern University, Boston, MA, (\email{tina@eliassi.org}).}}
\begin{document}
\maketitle

\begin{abstract}
Whether comparing networks to each other or to random expectation, measuring dissimilarity is essential to understanding the complex phenomena under study. However, determining the structural dissimilarity between networks is an ill-defined problem, as there is no canonical way to compare two networks. Indeed, many of the existing approaches for network comparison differ in their heuristics, efficiency, interpretability, and theoretical soundness. Thus, having a notion of distance that is built on theoretically robust first principles and that is interpretable with respect to features ubiquitous in complex networks would allow for a meaningful comparison between different networks. Here we introduce a theoretically sound and efficient new measure of graph distance, based on the ``length spectrum" function from algebraic topology, which compares the structure of two undirected, unweighted graphs by considering their non-backtracking cycles. We show how this distance relates to structural features such as presence of hubs and triangles through the behavior of the eigenvalues of the so-called non-backtracking matrix, and we showcase its ability to discriminate between networks in both real and synthetic data sets. By taking a topological interpretation of non-backtracking cycles, this work presents a novel application of Topological Data Analysis to the study of complex networks.
\end{abstract}

\begin{keywords}
  graph distance, algebraic topology, length spectrum
\end{keywords}

\begin{AMS}
  Spectral Graph Theory, Length Spectrum, Random Graphs, Metric Spaces, Topological Data Analysis, Geometric Data Analysis 
\end{AMS}

\section{Introduction}\label{sec:intro}
As the Network Science literature continues to expand and scientists compile more and more examples of real life networked data sets \cite{icon,kunegis2013konect} coming from an ever growing range of domains, there is a need to develop methods to compare complex networks, both within and across domains. Many such graph distance measures have been proposed \cite{soundarajan2014sdm,koutra2013deltacon,bagrow2018information,bento2018family,PhysRevE.86.036104,schieber2017quantification,chowdhury2017distances,chowdhury2018metric,berlingerio2013asonam}, though they vary in the features they use for comparison, their interpretability in terms of structural features of complex networks, and their computational costs, as well as in the discriminatory power of the resulting distance measure. This reflects the fact that complex networks represent a wide variety of systems whose structure and dynamics are difficult to encapsulate in a single distance score. For the purpose of providing a principled, interpretable, efficient and effective notion of distance, we turn to the \emph{length spectrum} function, which can be defined on a broad class of metric spaces that includes Riemannian manifolds as well as graphs. The discriminatory power of the Length spectrum is well known in other contexts: it can distinguish certain one-dimensional metric spaces up to isometry \cite{ConstantineLafont}, and it determines the Laplacian spectrum in the case of closed hyperbolic surfaces \cite{leininger2007length}. However, it is not clear if this discriminatory power is also present in the case of complex networks. Accordingly, we present a study on the following question: \textbf{is the length spectrum function useful for the comparison of complex networks?}

We answer the above question in the positive by introducing the \textbf{Truncated Non-Backtracking Spectral Distance (TNBSD)}:  a principled, interpretable, efficient, and effective method that quantifies the distance between two undirected, unweighted networks. \textbf{TNBSD} has several desirable properties.  First, \textbf{TNBSD} is based on the theory of the length spectrum and the set of non-backtracking cycles of a graph (a non-backtracking cycle is a closed walk that does not retrace any edges immediately after traversing them); these provide the theoretical background of our method. Second, \textbf{TNBSD} is interpretable in terms of features of complex networks  such as existence of hubs and triangles.  This helps in the interpretation and visualization of distance scores yielded by \textbf{TNBSD}. Third, \textbf{TNBSD} is a computationally efficient method, needing no more than the computation of a few largest eigenvalues of the so-called non-backtracking matrix of a graph, Fourth, \textbf{TNBSD} is effective at distinguishing real and synthetic networks as shown by our extensive experiments in Section~\ref{sec:experiments}. In studying \textbf{TNBSD}, we highlight the topological interpretation of the non-backtracking cycles of a graph, present an efficient algorithm to compute the non-backtracking matrix, and discuss the data visualization capabilities of its complex eigenvalues (see  Fig.~\ref{fig:random_eigenvalues}).

\begin{figure}[t]
\centering
\includegraphics[width=0.6\textwidth]{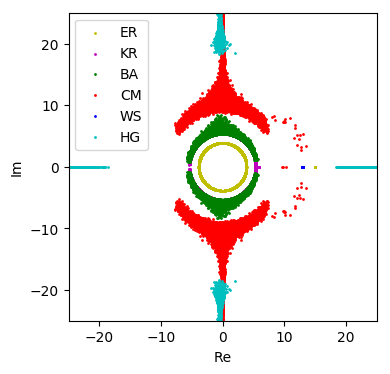}
\caption{\label{fig:random_eigenvalues}
(Best viewed in color.) Complex eigenvalues of the non-backtracking matrix of random graphs (see Sec.~\ref{sub:nbc} for definition). For different random graph models --Erd\"os-R\'enyi (ER) \cite{erdos1960p,bollobas2001random}, Barab\'asi-Albert (BA) \cite{barabasi1999emergence}, Stochastic Kronecker Graphs (KR) \cite{leskovec2010kronecker,seshadhri2013depth}, Configuration Model with power law degree distribution (CM; $\gamma=2.3$) \cite{newman2003structure}, Watts-Strogatz (WS) \cite{watts1998collective}, Hyperbolic Graphs (HG; $\gamma=2.3$) \cite{krioukov2010hyperbolic,aldecoa2015hyperbolic}-- we plot the largest $r=200$ eigenvalues of each of 50 random graphs of each model on the complex plane. Observe that each model generates eigenvalue distributions presenting different geometric patterns. We analyze and exploit these patterns in order to fine-tune the Truncated Non-backtracking Spectral Distance, TNBSD (See Sec.~\ref{sec:nbm}). To make the plot more readable,  we do not show all of the eigenvalues. All graphs have $n=5\times10^4$ nodes and average degree approximately $\langle k \rangle = 15$.}
\end{figure}

\paragraph{Perspective} Hashimoto \cite{hashimoto1989zeta} discussed the non-backtracking cycles of a graph (and the associated non-backtracking matrix) in relation to the theory of Zeta functions in graphs. Terras \cite{terras2010zeta} explained the relationship between them and the free homotopy classes of a graph (see Sec.~\ref{sec:background}). More recently, the non-backtracking matrix has been used in the Network Science literature for diverse applications such as node centrality \cite{martin2014localization} and community detection \cite{krzakala2013spectral}, and the data mining tasks of clustering \cite{ren2011graph} and embedding \cite{jiang2018spectral}. In particular, the application to community detection is of special interest since it was proven that the non-backtracking matrix performs better at spectral clustering than the Laplacian matrix in some cases \cite{krzakala2013spectral}. Hence, there is recent interest in describing the eigenvalue distribution of the non-backtracking matrix in models such as the Erd\"os-R\'enyi random graph and the stochastic block model \cite{bordenave2015non,gulikers2016non,wood2017limiting,saade2014spectral}. 
 Our work differs from other applied treatments of the non-backtracking matrix in that we arrive at its eigenvalues from first principles, as a relaxed version of the length spectrum. Concretely, we use the eigenvalues to compare graphs because the spectral moments of the non-backtracking matrix describe certain aspects of the length spectrum (see Sec.~\ref{sec:relaxation}). The spectral moments of the adjacency and Laplacian matrices are also known to describe certain structural features of networks \cite{estrada1996spectral,preciado2013structural}. 

The rest of this paper is structured as follows. Section~\ref{sec:background} provides necessary background information on the length spectrum, non-backtracking cycles, and the non-backtracking matrix. Section~\ref{sec:relaxation} explains the connection between these the length spectrum and non-backtracking cycles, and how we use this connection in our derivation of the Truncated Non-backtracking Spectral Distance (TNBSD). Section~\ref{sec:nbm} describes our scalable algorithm for computing the non-backtracking matrix, as well as some of its spectral properties that help in the interpretation the TNBSD in terms of hubs and triangles. Section~\ref{sec:experiments} discusses the practical details of computing TNBSD as well as several case studies and applications. We conclude in Section \ref{sec:conclusions} with a discussion of limitations and future work.

\section{Theoretical Background}\label{sec:background}
Here we introduce two different theoretical constructions that may at first seem unrelated: the length spectrum of a metric space and the set of non-backtracking cycles of a graph. Our analysis pivots on the fact that the latter is a particular subset of the (domain of) the former.

\subsection{Length Spectrum}
Consider a metric space $X$ and a point $p \in X$. A closed curve that goes through $p$ is called a \emph{loop}, and $p$ is called the \emph{basepoint}. Two loops are \emph{homotopy equivalent} to one another relative to the basepoint when there exists a continuous transformation from one to the other that leaves the basepoint constant. The \emph{fundamental group} of $X$ with basepoint $p$ is denoted by $\pi_1(X, p)$ and is defined as the first homotopy group of $X$, i.e., the set of all loops in $X$ that go through $p$, modulo homotopy. Closed curves without a distinguished basepoint are called \emph{free loops}, and they correspond to conjugacy classes of $\pi_1(X, p)$. A well-known fact of homotopy theory is that if $X$ is path-connected then $\pi_1(X, p)$ is unique up to isomorphism, regardless of basepoint $p$. In the present work we only consider connected graphs, hence, we just write $\pi_1(X)$ when there is no ambiguity. For more on homotopy, refer to \cite{munkres2000topology,Hatcher}.

In general, the length spectrum is a function $\mathcal{L}$ from $\pi_1(X)$ of an arbitrary metric space $X$ to the real line, ${\mathcal{L}:\pi_1(X) \to \mathbb{R}}$, which assigns to each homotopy class of loops the infimum length among all of the representatives in its conjugacy class\footnote{The definition presented here is also known as \textit{marked} length spectrum. An alternative definition of the (unmarked) length spectrum does not depend on $\pi_1$; see for example \cite{leininger2007length}.}. Note, importantly, that the definition of length of a homotopy class considers the length of those loops not only in the homotopy class itself, but in all other conjugate classes. In the case of compact geodesic spaces, such as finite metric graphs which we consider in this work, this infimum is always achieved. For a finite graph where each edge has length one, the value of $\mathcal{L}$ on a homotopy class then equals the number of edges contained in the optimal representative. That is, for a graph $G=(V,E)$, $v \in V$, if $[c] \in \pi_1(G,v)$ and $c$ achieves the minimum length $k$ in all classes conjugate to $[c]$, we define $\mathcal{L}([c]) = k$.

Our interest in the length spectrum is supported by the two following facts. First, graphs are \emph{aspherical}. More precisely, once we are using a geometric realization of a graph $G$, its underlying topological space $\bar{G}$ is aspherical---all of its homotopy groups of dimension greater than 1 are trivial (see, for example, \cite{Hatcher}).\footnote{This follows from $G$ being homotopy equivalent to a bouquet of $k$ circles, where $k$ is the rank of the fundamental group of $G$. The universal covering of a bouquet of circles is contractible, which is equivalent to the space being aspherical.} Therefore, we study the only non-trivial homotopy group, the fundamental group $\pi_1(G)$. Second, Constantine and Lafont \cite{ConstantineLafont} showed that the length spectrum of a graph determines (a certain subset of) it up to isomorphism. Thus, we aim to determine when two graphs are close to each other by studying their length spectra relying on the main theorem of \cite{ConstantineLafont} on the marked length spectrum of spaces of dimension one. For completeness, we briefly mention the main result of \cite{ConstantineLafont}, which is known as {\it marked length spectrum rigidity}. For a metric space $X$ they define a subset $Conv(X)$, the minimal set to which $X$ retracts by deformation. Let $X_1,X_2$ be a pair of compact, non-contractible, geodesic spaces of topological dimension one. Their main theorem shows that if the marked length spectra of $X_1,X_2$ are the same, then $Conv(X_1)$ is isometric to $Conv(X_2)$. Now, when $X_1,X_2$ are graphs, as is our case, $Conv(X_i), i=1,2$, corresponds to the subgraph resulting from iteratively removing nodes of degree 1 from $G_i$; that is, $Conv(G_i)$ is the 2-core of $G_i$ \cite{batagelj2003m}. Thus, their main theorem states that \emph{when two graphs have the same length spectrum, their 2-cores are isomorphic}.

Given these results, it is natural to use the length spectrum as the basis of a measure of graph distance. Concretely, given two graphs, we aim to efficiently quantify how far their 2-cores are from being isomorphic by measuring the distance between their length spectra. In the next section, we explain our approach at implementing a computationally feasible solution for this problem.

\subsection{Non-Backtracking Cycles}\label{sub:nbc}
Here we introduce the non-backtracking cycles of a graph, and the associated non-backtracking matrix, and point out the connection between these and the theory of length spectra.

Let us set up some notation. Consider an undirected, unweighted graph $G=(V,E)$. For $e=(u,v)\in E$, define $e^{-1}$ as the same edge traversed in the inverse order, $e^{-1}=(v,u)$. A \emph{cycle} in $G$ is a sequence of edges $e_{1}e_{2}...e_{k}$ such that if $e_{i}=(u_{i},v_{i})$ then $v_{i}=u_{i+1}$ for $i=1,...,k-1$ and $v_{k}=u_{1}$. Here, $k$ is called \emph{length} of the cycle. A \emph{non-backtracking cycle} (NBC) is one where $e_{i+1}\neq e_{i}^{-1},\:i=1,...,k-1$ and $e_{k}\neq e_{1}^{-1}$; that is, an edge is never followed by its own inverse. Now let $|E|=m$. The associated non-backtracking matrix $B$ is the $2m\times2m$ matrix where each edge is represented by two rows and two columns, one per orientation: $(u,v)$ and $(v,u)$. For two edges $(u,v)$ and $(k,l)$, $B$ is given by 

\begin{equation}\label{eqn:nbm}
B_{k\to l, u\to v}=\delta_{vk}(1-\delta_{ul}),
\end{equation}

where $\delta_{ij}$ is the Kronecker delta. Thus, there is a 1 in the entry indexed by row $(k,l)$ and column $(u,v)$ when $u \neq l$ and $v = k$, and a 0 otherwise. Intuitively, one can interpret the $B$ matrix as the (unnormalized) transition matrix of a random walker that does not perform backtracks: the entry at row $(k,l)$ and column $(u,v)$ is positive if and only if a walker can move from node $u$ to node $v$ (which equals node $k$) and then to $l$, without going back to $u$.

The reason why NBCs are topologically relevant is, in a nutshell, because backtracking edges are homotopically trivial \cite{terras2010zeta}. Observe that the matrix $B$ tracks each pair of incident edges that do not comprise a backtrack; indeed, $tr(B^k)$ equals the number of NBCs of length $k$ in the graph. This fact will be fundamental in our later exposition. Observe too that $B$ is not symmetric, and hence its eigenvalues are in general complex numbers.

If one is interested not in $B$ itself, but rather in its eigenvalues, one may use the so-called Ihara determinant formula \cite{hashimoto1989zeta,bass1992ihara}, which says that the eigenvalues of $B$ different than $\pm 1$ are also the eigenvalues of the $2n \times 2n$ block matrix
\begin{equation}
B' = 
\begin{pmatrix}
  A & I - D \\
  I & 0
\end{pmatrix}
\end{equation}
where $A$ is the adjacency matrix, $D$ is the diagonal matrix with the degrees, and $I$ is the identity matrix of the appropriate size.

\section{Truncated Non-Backtracking Spectral Distance (TNBSD)}\label{sec:relaxation}
We want to quantify the distance (dissimilarity) between two graphs by measuring the distance between their length spectra. However, there are two main obstacles to such a task: i) computing the length spectrum of a given graph is not a straightforward task as it depends on the fundamental group, whose computation is prohibitive\footnote{More precisely, computing the fundamental group of a graph is trivial since it is a free group. However, what is prohibitive is describing the length spectrum as defined on the fundamental group since, to the best of the authors' knowledge, this would require the individual enumeration of the length of each of (its infinitely many) elements.}, and ii) it is not clear how to compare two length spectra functions that come from two distinct graphs since they are defined on disjoint domains (the fundamental groups of two distinct graphs)\footnote{In \cite{ConstantineLafont}, the authors need an isomorphism between the fundamental group of the spaces that are being compared --which is also computationally prohibitive.}. In order to overcome these obstacles, we propose an relaxed version of the length spectrum, which we denote by $\mathcal{L}'$ and whose construction comes in the form of a two-step aggregation of the values of $\mathcal{L}$; see Figure~\ref{fig:relaxed} for an overview of this procedure.

\begin{figure}[h]
\centering
\includegraphics[width=0.9\textwidth]{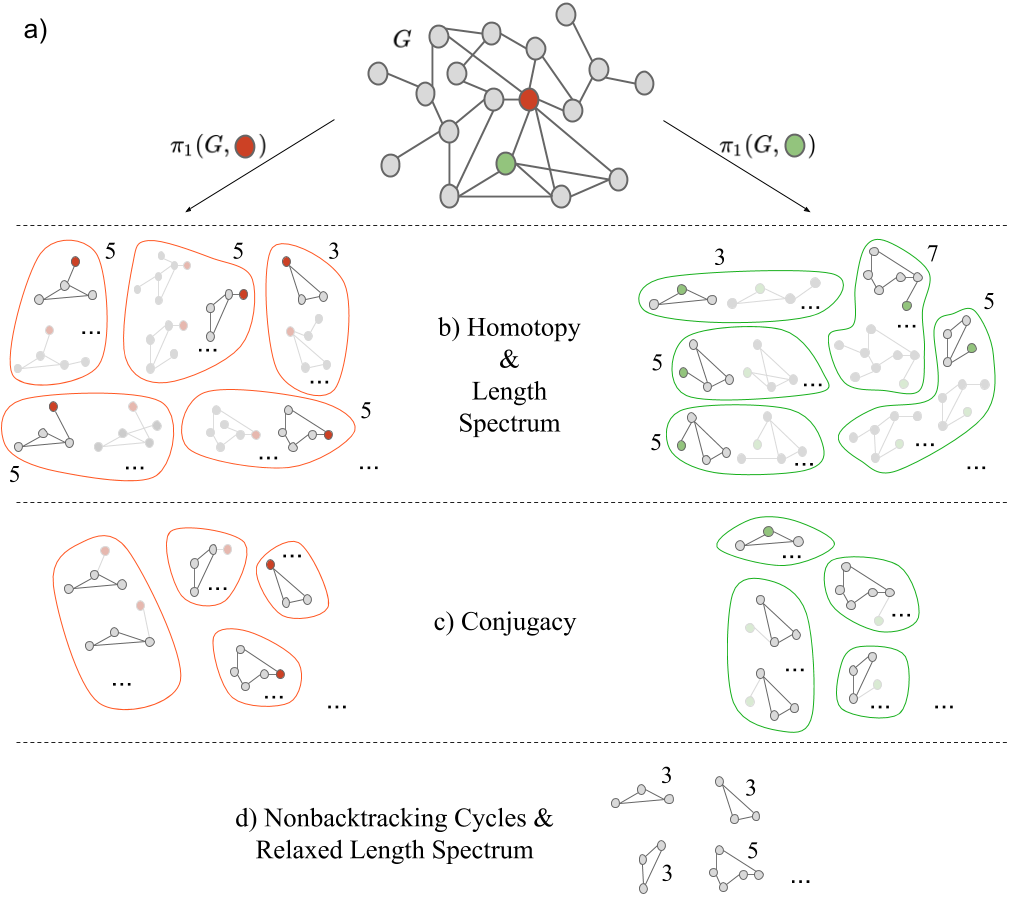}
\caption{\label{fig:relaxed}
Aggregating the values of the length spectrum. \textbf{a)} A graph $G$ with two nodes highlighted in red and green. These two nodes are used as basepoints to construct two versions of the fundamental group. \textbf{b)} The set of all cycles based at the red node (left) and green node (right). For either set of cycles, we encircle together those that are homotopy equivalent, thus forming a homotopy class. We highlight the representative with minimal length. Note that the lengths of corresponding cycles can change when the basepoints change. \textbf{c)} We have kept only the highlighted representative in each class in b) and encircled together those that are conjugate. In each conjugacy class, we highlight the (part of) each cycle that corresponds to the free homotopy loop. \textbf{d)} By taking one representative of each conjugacy class, and ignoring basepoints, we arrive at the free homotopy classes, or equivalently, at the set of non-backtracking cycles. Observe that the non-backtracking cycle at the top left, a triangle, originates from a homotopy class whose length is $5$ when the basepoint is the red node, but $3$ when the basepoint is the green node. The ellipses inside the closed curves mean that there are infinitely many more elements in each set. The ellipses outside the curves mean that there are infinitely many more classes or cycles.}
\end{figure}

\subsection{Relaxed Length Spectrum}
The first step of this procedure is to focus on the image, rather than the domain of the length spectrum (i.e., focus on the collection of lengths of cycles) in a way that will be clear in the next few lines. The second step is to aggregate these values by considering the sizes of the level sets of either length spectrum.

Concretely, when comparing two graphs $G, H$, instead of comparing $\mathcal{L}_G$ and $\mathcal{L}_H$ directly, we compare the number of cycles in $G$ of length 3 vs. the number of cycles in $H$ of the same length, as well as the number of cycles of length 4, of length 5, etc, thereby essentially considering the length spectra as histograms rather than functions. Theoretically, focusing on the size of the level sets provides a common ground to compare the two functions. In practice, this aggregation allows us to reduce the amount of memory needed to store either length spectrum because we no longer keep track of the exact composition of each of the infinitely many (free) homotopy classes. Instead, we only keep track of the frequency of their lengths. According to this aggregation, we define the \textit{relaxed} version of the length spectrum as the set of points $\mathcal{L}' = \{(k, n(k)):k=1,2,..\}$, where $n(k)$ is the number of conjugacy classes of $\pi_1$ (i.e., free homotopy classes) of length $k$.

The major downside of removing focus from the underlying group structure and shifting it towards (the histogram of values in) the image is that we lose information about the combinatorial composition of each cycle. Concretely, $\pi_1(G)$ holds information about the number of cycles of a certain length $k$ in $G$; this information is also stored in $\mathcal{L}'$. However, the group structure of $\pi_1(G)$ also allows us to know how many of those cycles of length $k$ are formed by the concatenation of two (three, four, etc.) cycles of different lengths. This information is lost when considering only the sizes of level sets of the image, i.e., when considering $\mathcal{L}'$. Fortunately, our experiments indicate that $\mathcal{L}'$ contains enough discriminatory information to distinguish between real and synthetic graphs effectively; see Section~\ref{sec:experiments}.

The next step makes use of the non-backtracking cycles (NBCs). We rely on NBCs because it is known (see e.g., \cite{hashimoto1989zeta,terras2010zeta}) that the set of conjugacy classes of $\pi_1(G)$ is in bijection with the set of NBCs of $G$. In other words, to compute the set $\mathcal{L}'$ we need only account for the lengths of all NBCs. Indeed, consider the non-backtracking matrix $B$ of $G$ and recall that $tr(B^k)$ equals the number of NBCs of length $k$ in the graph. This gives us precisely the set $\mathcal{L}' = \{(k, tr(B^k))\}_{k=1}^\infty$. Observe further that $tr(B^k)$ equals the sum of all of $B$'s eigenvalues raised to the $k$-th power. Therefore, the eigenvalues of $B$ contain all the information necessary to compute and compare $\mathcal{L}'$. In this way, we can study the (eigenvalue) spectrum of $B$, as a proxy for the (length) spectrum of $\pi_{1}$. Note that the use of $\mathcal{L}'$ presents one possible solution to the problems of how to compute and how to compare the length spectrum. We leave the investigation of alternative solutions to future lines of research.

\subsection{Properties of TNBSD}
The previous discussion yields a relaxed version of the length spectrum, $\mathcal{L}'$, that can be found efficiently: simply compute the associated matrix $B$ and its eigenvalues. We are finally prepared to state our definition of graph distance $d$ based on the length spectrum $\mathcal{L}$.

\begin{definition}\label{def:tnbsd}
Consider two graphs $G, H$, and write $\lambda_{k} = a_k + i b_k \in \mathbb{C}$ for the eigenvalues of the non-backtracking matrix of $G$ and $\mu_k = \alpha_k + i \beta_k$ for those of $H$, for $k=1,2,..,r$, where $r$ is some positive integer. Sort the eigenvalues in decreasing order of magnitude, $|\lambda_1| \geq |\lambda_{2}| \geq ... \geq |\lambda_{r}|$, $|\mu_1| \geq |\mu_{2}| \geq ... \geq |\mu_{r}|$. We define the \emph{truncated non-backtracking spectral distance (TNBSD)} between $G$ and $H$ as follows,
\begin{equation}
d(G, H) = d(\mathcal{L}'_{G}, \mathcal{L}'_{H}) = \sqrt{\sum_{k=1}^r | a_k - \alpha_k |^2 + |b_k - \beta_k|^2}
\end{equation}
\end{definition}

\begin{remark*}
Note that Definition~\ref{def:tnbsd} is the Euclidean distance between two $2r$-dimensional vectors whose entries are the real and imaginary parts of the eigenvalues of the respective non-backtracking matrices. The reason to separate the real and imaginary parts is that they have different interpretations with respect to features of complex networks such as hubs and triangles (see Sec.~\ref{sub:spectral}).
\end{remark*}

\begin{proposition}
$d$ is a pseudometric.
\end{proposition}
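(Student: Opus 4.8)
The plan is to recognize $d$ as the Euclidean distance on $\mathbb{R}^{2r}$ pulled back along a feature map, so that each pseudometric axiom reduces to a standard property of $(\mathbb{R}^{2r}, \|\cdot\|_2)$. To each graph $G$ I would associate the vector $\Phi(G) = (a_1, b_1, \dots, a_r, b_r) \in \mathbb{R}^{2r}$, where the $a_k + i b_k$ are the $r$ eigenvalues of largest magnitude of the non-backtracking matrix of $G$, ordered as in Definition~\ref{def:tnbsd}. Before anything else I would make $\Phi$ well defined as a function of $G$ alone: if the non-backtracking matrix has fewer than $r$ eigenvalues I pad the list with zeros, and if several eigenvalues share the same magnitude I fix once and for all a deterministic tie-breaking rule (e.g.\ by argument). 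With this convention a direct comparison of the two formulas shows $d(G,H) = \|\Phi(G) - \Phi(H)\|_2$.

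Granting this identity, the first two axioms are immediate. Non-negativity holds because $d(G,H)$ is the nonnegative square root of a sum of squares, and $d(G,G) = \|\Phi(G) - \Phi(G)\|_2 = 0$ since a graph determines a single vector. Symmetry, $d(G,H) = d(H,G)$, follows termwise from $|a_k - \alpha_k| = |\alpha_k - a_k|$ and $|b_k - \beta_k| = |\beta_k - b_k|$, and does not even require the reformulation.

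The only step with genuine content is the triangle inequality, and this is precisely where the embedding into $\mathbb{R}^{2r}$ pays off: for any third graph $K$, Minkowski's inequality for the $\ell^2$ norm gives
\begin{align*}
d(G,H) &= \|\Phi(G) - \Phi(H)\|_2 \\
&\le \|\Phi(G) - \Phi(K)\|_2 + \|\Phi(K) - \Phi(H)\|_2 = d(G,K) + d(K,H).
\end{align*}
Because $\Phi(G)$, $\Phi(K)$, and $\Phi(H)$ each depend only on their own graph, no compatibility between the three orderings is needed, and the inequality follows at once. I expect the triangle inequality to be the main obstacle only in the nominal sense that it is the single axiom that is not a one-line observation; the genuine subtlety lies upstream, in pinning down $\Phi$ unambiguously despite the magnitude-sorting. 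Finally, I would note why $d$ is merely a pseudometric and not a metric: the map $\Phi$ need not be injective, since two non-isomorphic graphs can share the same truncated non-backtracking spectrum, so $d(G,H) = 0$ does not force $G$ and $H$ to be isomorphic.
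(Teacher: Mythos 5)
Your proposal is correct and follows essentially the same route as the paper: both identify $d$ as the Euclidean distance between feature vectors built from the real and imaginary parts of the eigenvalues, inherit non-negativity, symmetry, and the triangle inequality from $\|\cdot\|_2$, and conclude that $d$ is only a pseudometric because distinct graphs can share all their eigenvalues. Your explicit well-definedness step (zero-padding and a deterministic tie-breaking rule for eigenvalues of equal magnitude) is a point of care that the paper's one-line proof glosses over, but it does not change the substance of the argument.
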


\begin{proof}
The function $d$ inherits from the Euclidean distance in its definition several desirable properties: non-negativity, symmetry, and, importantly, the triangle inequality. However, the distance between two distinct graphs may be zero when they share all of their eigenvalues. Thus, $d$ is not a metric over the space of graphs but a pseudometric.
\end{proof}

The authors of \cite{koutra2013deltacon} propose a few axioms and properties that a measure of graph similarity should satisfy. Here, we present the equivalent axioms and properties for a measure of graph dissimilarity (distance) and show that the TNBSD satisfies them. The axioms are as follows:

\begin{enumerate}
\item[A1.] Identity: $d(G,G) = 0$.
\item[A2.] Symmetry: $d(G,H) = d(H,G)$.
\item[A3.] Divergence: $d(K_n, \bar{K}_n) \to \infty$ as $n\to \infty$, where $K_n$ is the complete graph and $\bar{K}_n$ is the empty graph (a graph with zero edges).
\end{enumerate}

\begin{proposition}
$d$ satisfies axioms A1-A3.
\end{proposition}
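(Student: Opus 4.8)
The plan is to dispatch A1 and A2 immediately and to concentrate all effort on A3. Axioms A1 and A2 are inherited from the previous proposition: since $d$ is defined as a Euclidean distance between the two $2r$-dimensional real vectors of (real and imaginary parts of) eigenvalues, identity $d(G,G)=0$ holds because those two vectors coincide when $G=H$, and symmetry $d(G,H)=d(H,G)$ holds because each term $|a_k-\alpha_k|$ and $|b_k-\beta_k|$ is symmetric in its arguments. I would state this in a single line as a consequence of the non-negativity and symmetry of the Euclidean distance already discussed.

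For A3 I first simplify the target. The empty graph $\bar{K}_n$ has no edges, so its non-backtracking matrix is empty and, under the natural convention of padding a short eigenvalue list with zeros, all of its eigenvalues $\mu_k$ equal $0$. Hence
\[
d(K_n,\bar{K}_n) = \sqrt{\sum_{k=1}^r |\lambda_k|^2},
\]
where $\lambda_1,\dots,\lambda_r$ are the $r$ largest-magnitude eigenvalues of the non-backtracking matrix $B$ of $K_n$. It therefore suffices to show that the spectral radius of $B(K_n)$ diverges with $n$. I would extract the spectrum via the Ihara determinant formula: for $K_n$ we have $A=J-I$ with eigenvalues $n-1$ (once) and $-1$ (with multiplicity $n-1$), and $K_n$ is $(n-1)$-regular, so $D=(n-1)I$ and $I-D=(2-n)I$. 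Substituting an eigenpair of $A$ into the block matrix $B'$ reduces the eigenvalue problem to the scalar quadratic $\lambda^2 - \nu\lambda + (n-2)=0$ for each eigenvalue $\nu$ of $A$.

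The key computation is to read off the moduli. For $\nu=n-1$ the quadratic factors as $(\lambda-1)(\lambda-(n-2))$, giving roots $1$ and $n-2$, the latter being exactly the Perron--Frobenius value $d-1$ of a $(n-1)$-regular graph. For each of the $n-1$ copies of $\nu=-1$ the discriminant $9-4n$ is negative once $n\ge 3$, producing a complex-conjugate pair of modulus $\sqrt{n-2}$. Thus $B(K_n)$ carries one eigenvalue of magnitude $n-2$ together with $2(n-1)$ eigenvalues of magnitude $\sqrt{n-2}$, so the top magnitude is $|\lambda_1|=n-2$ and
\[
d(K_n,\bar{K}_n) = \sqrt{\sum_{k=1}^r |\lambda_k|^2} \;\ge\; |\lambda_1| = n-2 \to \infty \quad\text{as } n\to\infty,
\]
which proves A3. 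When $n$ is large enough that $r\le 2n-1$ and $\sqrt{n-2}>1$, the full truncated sum in fact equals $\sqrt{(n-2)^2+(r-1)(n-2)}$, which diverges just as plainly.

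The main obstacle I anticipate is the spectral computation for $B(K_n)$: one must set up the Ihara quadratic correctly and track which roots are real versus complex-conjugate in order to justify that the spectral radius grows like $n$ rather than remaining bounded. Two bookkeeping points deserve explicit care: the convention that the eigenvalues of the edgeless graph are all zero (so that padding with zeros is legitimate), and whether $r$ is held fixed or allowed to grow with $n$. Because the lower bound $d(K_n,\bar{K}_n)\ge n-2$ uses only the single largest eigenvalue, divergence holds irrespective of $r$; I would flag this explicitly to remove any ambiguity about the truncation parameter.
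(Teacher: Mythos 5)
Your proposal is correct and follows the same skeleton as the paper's proof: A1 and A2 fall out of the pseudometric (Euclidean) structure of $d$, and A3 is handled by a convention for the edgeless graph together with divergence of the non-backtracking spectrum of $K_n$. Where you genuinely add something is in that last step. The paper merely asserts that ``the eigenvalues of the complete graph grow as the number of nodes grows'' and accepts A3 by convention (offering, as a fallback, a comparison of $K_n$ with an almost-empty graph whose two eigenvalues are zero); it computes nothing. You instead prove the assertion: using the Ihara determinant formula with $A = J - I$ and $D = (n-1)I$, the block eigenproblem reduces to $\lambda^2 - \nu\lambda + (n-2) = 0$ for $\nu \in \{n-1, -1\}$, giving one eigenvalue $n-2$, one eigenvalue $1$, and, for each of the $n-1$ eigenvalues $\nu = -1$, a complex-conjugate pair of modulus $\sqrt{n-2}$ (discriminant $9-4n < 0$ for $n \geq 3$) --- all of which checks out, and matches the Perron--Frobenius value $d-1$ for a $d$-regular graph. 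This yields the explicit bound $d(K_n, \bar{K}_n) \geq n-2$, which diverges independently of the truncation parameter $r$; the interplay between fixed $r$ and growing $n$ is a subtlety the paper's proof never addresses. Your zero-padding convention for $\bar{K}_n$ is simply a precise formulation of the convention the paper invokes informally (and it also covers the paper's almost-empty fallback when $r > 2$). In short: same route, but your version closes the computational gap that the paper's own proof leaves open.
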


\begin{proof}
Axioms A1 and A2 are satisfied because $d$ is a pseudometric. Axiom A3 is satisfied by observing that the non-backtracking matrix of the empty graph has zero rows, and thus it has no eigenvalues, while the eigenvalues of the complete graph grow as the number of nodes grows. Thus we may accept that $d$ satisfies axiom A3 by convention. If the reader is not satisfied by the fulfillment of an axiom by mere convention, we offer an alternative. We may compare the complete graph $K_n$ to the graph on $n$ nodes with a single edge linking two arbitrary nodes (an \emph{almost} empty graph), in which case its non-backtracking matrix has two rows and two eigenvalues equal to zero. Axiom A3 is still satisfied.
\end{proof}




\subsection{Using \texorpdfstring{$\mathcal{L}'$}{L'} instead of 
\texorpdfstring{$\mathcal{L}$}{L}}

Although the truncated non-backtracking spectral distance satisfies all desired axioms and properties, we have deviated from the original definition of the length spectrum in important ways. In fact, as pointed out earlier, $\mathcal{L}'$ is admittedly weaker than $\mathcal{L}$ and thus one may ask if there are theoretical guarantees that the relaxed version of the length spectrum will keep some of the discriminatory power of the original. Indeed, even though the main inspiration for our work is the main result of \cite{ConstantineLafont}, we can still trust the eigenvalue spectrum of $B$ to be useful when comparing graphs. On the one hand, the spectrum of $B$ has been found to yield fewer isospectral graph pairs when compared to the adjacency and Laplacian matrices in the case of small graphs \cite{durfee2015distinguishing}. On the other hand, $B$ is tightly related to the theory of graph zeta functions \cite{hashimoto1989zeta}, in particular the Ihara Zeta function, which is known to determine several graph properties such as girth, number of spanning trees, whether the graph is bipartite, a forest, or regular, among others \cite{cooper2009properties}. Thus, both as a relaxed version of the original length spectrum, but also as an object of interest in itself, we trust the eigenvalue spectrum of the non-backtracking matrix $B$ to be of use when determining the dissimilarity between two graphs.

\section{Non-Backtracking Matrix: Algorithm and Properties}\label{sec:nbm}
For the rest of this work, we focus on the nonbactracking matrix and its properties. We now present a spectral analysis which will aid in the study of several aspects of the proposed distance TNBSD. We present an algorithm for computing $B$, as well as describe properties of the eigenvalue distribution in connection with features of complex networks.

\subsection{Computing \texorpdfstring{$B$}{B}}
Given a graph with $n$ nodes and $m$ undirected edges, define the $n \times 2m$ incidence matrices $P_{x,u \to v} = \delta_{xu}$ and $Q_{x,u \to v} = \delta_{xv}$, and write $C = P^T Q$. Observe that $C_{k\to l, u\to v} = \delta_{vk}$. Therefore,
\begin{equation}
B_{k\to l, u\to v} = C_{k\to l, u\to v} (1 - C_{u\to v, k\to l})
\end{equation}
Note that an entry of $B$ may be positive only when the corresponding entry of $C$ is positive. Therefore, we can compute $B$ in a single iteration over the nonzero entries of $C$. Now, $C$ has a positive entry for each pair of incident edges in the graph, thus we find $nnz(C) = O(n\langle k^{2}\rangle)$, where $\langle k^{2}\rangle$ is the second moment of the degree distribution, and $nnz(C)$ is the number of non-zero entries in $C$. Since computing $P, Q$ takes $O(m)$ time, we can compute $B$ in time $O(m + n\langle k^{2}\rangle)$. For example, in the case of a power-law degree distribution with exponent ${2 \leq \gamma \leq 3}$, the runtime of our algorithm falls between $O(m+n)$ and $O(m+n^2)$. Note that if a graph is given in adjacency list format, one can build $B$ directly from the adjacency list in time $\Theta(n\langle k^2 \rangle - n\langle k \rangle)$ by generating a sparse matrix with the appropriate entries set to $1$ in a single iteration over the adjacency list.

\subsection{Spectral Properties}\label{sub:spectral}
Observe that the sparsity of $B$ grows with the second moment of the degree distribution.

\begin{lemma}
Consider the non-backtracking matrix $B$ of a graph $G$ with $n$ nodes and let $nnz(B)$ be the number of non-zero elements therein. Then,
\begin{equation}
nnz(B) = n\big(\langle k^2 \rangle - \langle k \rangle\big),
\end{equation}
where $\langle k \rangle$ and $\langle k^2 \rangle$ are the first and second moments of the degree distribution of $G$, respectively.
\end{lemma}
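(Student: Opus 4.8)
The plan is to count non-zero entries of $B$ directly from its definition by summing over pairs of incident edges at each node, which naturally produces the quantity $\langle k^2 \rangle - \langle k \rangle$. Recall from the construction that $B_{k\to l, u\to v} = \delta_{vk}(1 - \delta_{ul})$, so a non-zero entry requires $v = k$ (the second edge must continue from where the first ended) and $u \neq l$ (the walk must not immediately backtrack). Thus each non-zero entry corresponds exactly to an ordered pair of directed edges $(u\to v)$, $(v\to l)$ that share the intermediate node $v$ and do not form a backtrack, i.e. $u \neq l$.

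First I would fix the intermediate node $v$ and count the non-backtracking transitions passing through it. A directed edge arriving at $v$ can come from any of its $\deg(v)$ neighbors, and a directed edge leaving $v$ can go to any of its $\deg(v)$ neighbors; this gives $\deg(v)^2$ ordered incident pairs through $v$. Of these, exactly $\deg(v)$ are backtracks (those where the outgoing edge returns to the same neighbor the walk just came from, $u = l$). Hence the number of valid non-zero entries with intermediate node $v$ is $\deg(v)^2 - \deg(v)$.

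Next I would sum over all nodes. Writing $\deg(v) = k_v$, the total is
\begin{equation}
nnz(B) = \sum_{v \in V} \big(k_v^2 - k_v\big) = \sum_{v \in V} k_v^2 - \sum_{v \in V} k_v.
\end{equation}
To finish, I would convert these sums into moments of the degree distribution using the definitions $\langle k \rangle = \tfrac{1}{n}\sum_v k_v$ and $\langle k^2 \rangle = \tfrac{1}{n}\sum_v k_v^2$, so that $\sum_v k_v = n\langle k \rangle$ and $\sum_v k_v^2 = n\langle k^2 \rangle$. Substituting yields $nnz(B) = n\langle k^2 \rangle - n\langle k \rangle = n(\langle k^2 \rangle - \langle k \rangle)$, as claimed.

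I expect no serious obstacle here, since the argument is a bookkeeping count rather than a deep structural fact; the only point requiring care is the justification that distinct intermediate nodes $v$ index disjoint sets of non-zero entries (so the per-node counts genuinely add up without double-counting), and that every non-zero entry of $B$ is captured by exactly one choice of $v$. This follows immediately because the row-column index pair determines the directed edges $(u\to v)$ and $(v\to l)$ and hence the shared node $v$ uniquely. A secondary subtlety is confirming that the $\deg(v)$ backtracking pairs subtracted at node $v$ are precisely those excluded by the factor $(1-\delta_{ul})$, which is transparent from the definition in \eqref{eqn:nbm}.
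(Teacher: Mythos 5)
Your proof is correct and follows essentially the same route as the paper, which simply states that the result follows by using the definition $B_{k\to l,\,u\to v}=\delta_{vk}(1-\delta_{ul})$ to sum over all entries of $B$; your per-node count of $\deg(v)^2-\deg(v)$ non-backtracking transitions is exactly the fleshed-out version of that one-line argument. Nothing is missing, and the care you take about disjointness of the per-node counts is a valid (if transparent) justification of the same computation.
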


\begin{proof}
This is seen by using Equation~\ref{eqn:nbm} to sum over all the elements of $B$.
\end{proof}

Contrast this to ${nnz(A) = n \langle k \rangle}$, where $A$ is the adjacency matrix of the graph. Experimentally, we have found that the larger $\langle k^2 \rangle$, the larger the variance of $B$'s complex eigenvalues along the imaginary axis (Figure~\ref{fig:random_eigenvalues}).

Next, we turn to $B$'s eigenvalues and their relation to the number of triangles. Write $\lambda_k = a_k + i b_k \in \mathbb{C}$ for the eigenvalues of $B$, $k=1,2,..,2m$. The number of triangles in a network is proportional to $tr(B^{3}) = \sum_k Re(\lambda_k^3)$,\footnote{The imaginary part of this expression vanishes because the complex eigenvalues of a matrix always come in conjugated pairs.} which, by a direct application of the binomial theorem, equals
\begin{equation}\label{eqn:triangles}
tr(B^3) = \sum_{k=1}^{2m} a_{k}(a_k^2 - 3 b_{k}^{2}).
\end{equation}
On the one hand, $B$'s eigenvalues tend to fall on a circle in the complex plane \cite{krzakala2013spectral,angel2007non,wood2017limiting,bordenave2015non}. On the other hand, if $\sum_k a_k^2$ is large and $\sum_k b_k^2$ is small (implying a large number of triangles), the $\lambda_k$ cannot all fall too close to the circle. Hence, the more triangles in the graph, the less marked the circular shape of the eigenvalues (Figure~\ref{fig:random_eigenvalues}).

Finally, a note of practical importance on the spectrum of $B$. The multiplicity of the eigenvalue $0$ equals the number of edges outside of the 2-core of the graph. For example, a tree, whose 2-core is empty, has all its eigenvalues equal to $0$. On the one hand, we could use this valuable information as part of our method to compare two graphs. On the other hand, the existence of zero eigenvalues does not change the value of $tr(B^k),k\geq0$, and thus leaves $\mathcal{L}'$, the relaxed length spectrum, intact. Moreover, removing the nodes of degree one reduces the size of $B$ (or the sparsity of $B'$, see Sec.~\ref{sub:nbc}), which makes the computation of non-zero eigenvalues faster.

\section{Experiments}\label{sec:experiments}
We discuss practical aspects of computing the truncated non-backtracking spectral distance (TNBSD), as well as explain how to fine tune it to be more sensitive to triangles and degree distribution. We also present experimental evidence of its discriminatory power when comparing random and real graphs.

\subsection{Computing \texorpdfstring{$d$}{d}}
Given two graphs $G,H$ and a positive integer $r$, we compute the distance between the two graphs in three steps; see Algorithm \ref{alg:distance}. First, remove all nodes of degree one from either graph. As mentioned previously, nodes of degree one do not affect the spectrum $\mathcal{L}'$, and their removal makes the computation faster. Note that after removing a node of degree one, another node's degree might decrease from $2$ to $1$. Thus, we need to iterate this removal until all nodes in the graph have degree at least $2$. (This process is called ``shaving'' in graph mining, and yields the 2-core of the graph.) Second, compute the block matrix $B'$ (Sec.~\ref{sub:nbc}) from either graph and compute its largest $r$ eigenvalues. Third, write these as $\lambda_k=a_{k}+ib_{k}$ for $G$ and $\mu_k=\alpha_{k}+i \beta_{k}$ for $H$, where $|\lambda_k| \geq |\lambda_{k+1}|$ and $|\mu_k| \geq |\mu_{k+1}|$ for $k=1,...,r-1$, and assign to $G$ the feature vector $v_{1}=(\mathbf{a}, \mathbf{b})=(a_{1},a_{2},...,a_{r},b_{1},b_{2},...,b_{r})$, and to $H$ assign $v_{2}=(\bm{\alpha}, \bm{\beta})=(\alpha_{1},\alpha_{2},...,\alpha_{r},\beta_{1},\beta_{2},...,\beta_{r})$. Finally, compute the distance between $G$ and $H$ as $\|v_{1}-v_{2}\|$, where $\|\cdot\|$ is the Euclidean norm. See Figure~\ref{fig:distance} (left column) for results of applying this distance measure to random graph models.

\begin{algorithm}
\caption{Truncated non-backtracking Spectral Distance}
\label{alg:distance}
\hspace*{\algorithmicindent} \textbf{Input: Graphs $G,H$, positive integer $r$} \\
\hspace*{\algorithmicindent} \textbf{Output: real number $d$, distance between $G, H$}
\begin{algorithmic}[1]
\STATE{$\tilde{G},\tilde{H} \leftarrow \operatorname{shave}(G), \operatorname{shave}(H)$}
\STATE{$\{\lambda_{k}\}_{k=1}^r, \{\mu_{k}\}_{k=1}^r \leftarrow$ largest eigenvalues of $B'$ corresponding to $\tilde{G}, \tilde{H}$}
\STATE{$v_1, v_2 \leftarrow (a_{1}, .., a_{r}, b_{1}, .., b_{r}), (\alpha_{1}, .., \alpha_{r}, \beta_{1}, .., \beta_{r})$, \\
for ${\lambda_{k}=a_{k} + i b_{k}}$ and ${\mu_{k}=\alpha_{k} + i \beta_{k}}$, with $k=1,...,r$}
\STATE{$d \leftarrow \|v_1 - v_2\|$}
\RETURN $d$
\end{algorithmic}
\end{algorithm}

\subsection{Fine-tuning}\label{sub:tuning}
One advantage of this distance is that it can be fine tuned to capture certain features, namely those mentioned in Sec.~\ref{sub:spectral}. For instance, if number of triangles is of particular interest, one may accentuate the effect of equation \ref{eqn:triangles} as follows. If one increases the sum of squares of the real parts and decreases the sum of squares of imaginary parts, one would be artificially increasing the number of triangles in the graph. Hence, to emphasize this, one may compute the distance using the modified feature vectors

\begin{equation}\label{eqn:sigma}
v_1' = (\sigma \mathbf{a}, \mathbf{b}/\sigma), v_2' = (\sigma \bm{\alpha}, \bm{\beta}/\sigma),
\end{equation}

for some real number $\sigma\geq1$. We have also observed experimentally that the spread of the imaginary parts of the eigenvalues increases as the second moment of the degree distribution increases. Hence, if degree distribution is of interest, one may emphasize this effect by using instead the feature vectors

\begin{equation}
\label{eqn:eta}
\begin{aligned}[b]
v_1' &= (|\lambda_1|^\eta a_{1},...,|\lambda_r|^\eta a_{r},|\lambda_1|^\eta b_{1},...,|\lambda_r|^\eta b_{r}) \nonumber \\
v_2' &= (|\lambda_1|^\eta \alpha_{1},...,|\lambda_r|^\eta \alpha_{r},|\lambda_1|^\eta \beta_{1},...,|\lambda_r|^\eta \beta_{r})
\end{aligned}
\end{equation}

with $\eta \in \mathbb{R}$ and $\eta > 0$. See Fig.~\ref{fig:distance} for an example of using these modifications when comparing the random graphs shown in Fig.~\ref{fig:random_eigenvalues}.

\begin{figure}[h]
\centering
\includegraphics[width=0.9\textwidth]{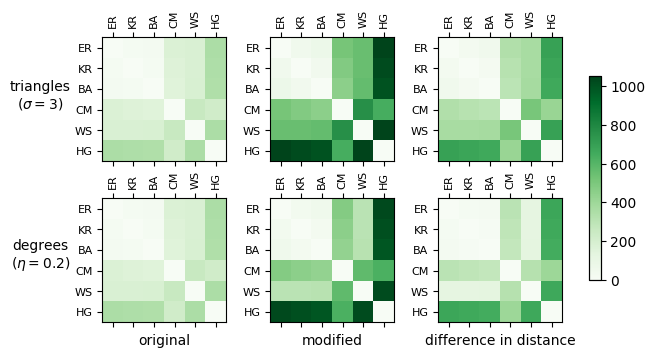}
\caption{\label{fig:distance}
(Best viewed in color.) Fine-tuning TNBSD on various random graphs. The graphs here are the same as the ones described in Fig.~\ref{fig:random_eigenvalues}.  Thus, ER represents Erd\"os-R\'enyi, BA represents Barab\'asi-Albert (BA), KR represents Stochastic Kronecker Graphs, CM represents Configuration Model with power law degree distribution ($\gamma=2.3$), WS represents for Watts-Strogatz, and HG represents Hyperbolic Graphs.  Left column: original (unmodified) TNBSD between the average eigenvalue vectors of the graphs. Middle column: modified TNBSD fine-tuned to triangles (top row) and degree distribution (bottom row). Right column: difference between original and modified TNBSD values. Observe that after fine-tuning to triangles, the distance to HG is increased the most since HG has by far the most triangles across all random graph models used. Similarly, when fine-tuning for degrees, both CM and HG are emphasized since they have-heavy tailed degree distributions. (At this number of nodes, $n=5\times 10^4$, the degree distribution of BA is not as heavy-tailed.)} 
\end{figure}

\subsection{Case Study 1: Clustering Random Graphs}\label{sub:cluster}
In the first case study, we compute the truncated non-backtracking spectral distance (TNBSD) between random graphs generated with different random graph models in order to find clusters corresponding to said models. We use a Gaussian mixture which we optimize with the Expectation Maximization (EM) algorithm (\cite{Murphy:2012:MLP:2380985}, Ch. 11). Since the Gaussian probability density function assigns likelihood based on the distance from an arbitrary point to the mean of the distribution, this setup explicitly uses the TNBSD to perform the learning task. In this case study, our purpose is to showcase the effectiveness of the TNBSD, as well as the fine-tuning mechanisms presented in a previous section, in an unsupervised learning setting. Our purpose is not to perform an exhaustive sweep of parameter space.

The experimental setup is as follows. We generate $50$ graphs of each of six different random graph models, for a total of $300$ graphs. Each graph has $5\times10^4$ nodes and approximate average degree $\langle k \rangle = 15$ (see Fig.~\ref{fig:random_eigenvalues}). The random graph models used were Erd\"os-R\'enyi (ER) \cite{erdos1960p,bollobas2001random}, Stochastic Kronecker Graph (KR) \cite{leskovec2010kronecker,seshadhri2013depth}, Barab\'asi-Albert (BA) \cite{barabasi1999emergence}, Configuration Model with power law degree distribution with exponent $\gamma=2.3$ (CM) \cite{newman2003structure}, Watts-Strogatz (WS) \cite{watts1998collective}, and Hyperbolic Graph with degree distribution exponent $\gamma=2.3$ (HG) \cite{krioukov2010hyperbolic,aldecoa2015hyperbolic}. We compute the largest $r=200$ eigenvalues of each graph. For each graph $j=1,..,300$, we generate the vector $v_j=(a_1, ..,a_r, b_1, .., b_r)$, where $\lambda_k = a_k + ib_k$, $k=1,...,r$ are the eigenvalues of the non-backtracking matrix. We use Kernel Principal Component Analysis (\cite{Murphy:2012:MLP:2380985}, Ch. 14) on the set of vectors $\{v_j\}$ to reduce the number of dimensions of the data set to two, for visualization purposes; the kernel used was cosine similarity. Next, we employ the EM algorithm to estimate data density in 2D space and predict which Gaussian component each graph is most likely to have come from (Fig.~\ref{fig:em}).

Using the unmodified distance, the results are as follows: three clusters are easily discernible (CM, HA, WS), while the other three (BA, ER, KR) are not quite so well defined (Figure~\ref{fig:em}a). However, as explained in Section~\ref{sub:tuning}, we can use the interpretable geometric features of the eigenvalue distribution to improve this result. We know that ER and BA will differ greatly by their degree distribution; specifically, BA will have large second moment of the degree distribution, $\langle k^2 \rangle$, at large number of nodes. However, the number of nodes used here ($5\times 10^4$) may not be enough to show this feature. Therefore, we need to emphasize this feature and make the distance measure more sensitive to $\langle k^2 \rangle$ by using Equation~\ref{eqn:eta}. We find that a value of $\eta=0.6$ gives the desired result: the cluster corresponding to BA graphs is more easily discernible from ER, KR (Figure~\ref{fig:em}b). Furthermore, we know that KR and ER differ in the number of expected triangles. Thus, using Equation~\ref{eqn:sigma}, we find the parameter $\sigma=11$ that makes ER and KR graphs more distinguishable (Figure~\ref{fig:em}c). The combination of these two fine-tuned parameters allows us to recover with great accuracy the original random models originating the graphs (Figure~\ref{fig:em}d). The best accuracy achieved across all random initializations of the experiment was $98.66\%$.

\begin{figure}[t]
\centering
\includegraphics[width=1.\textwidth]{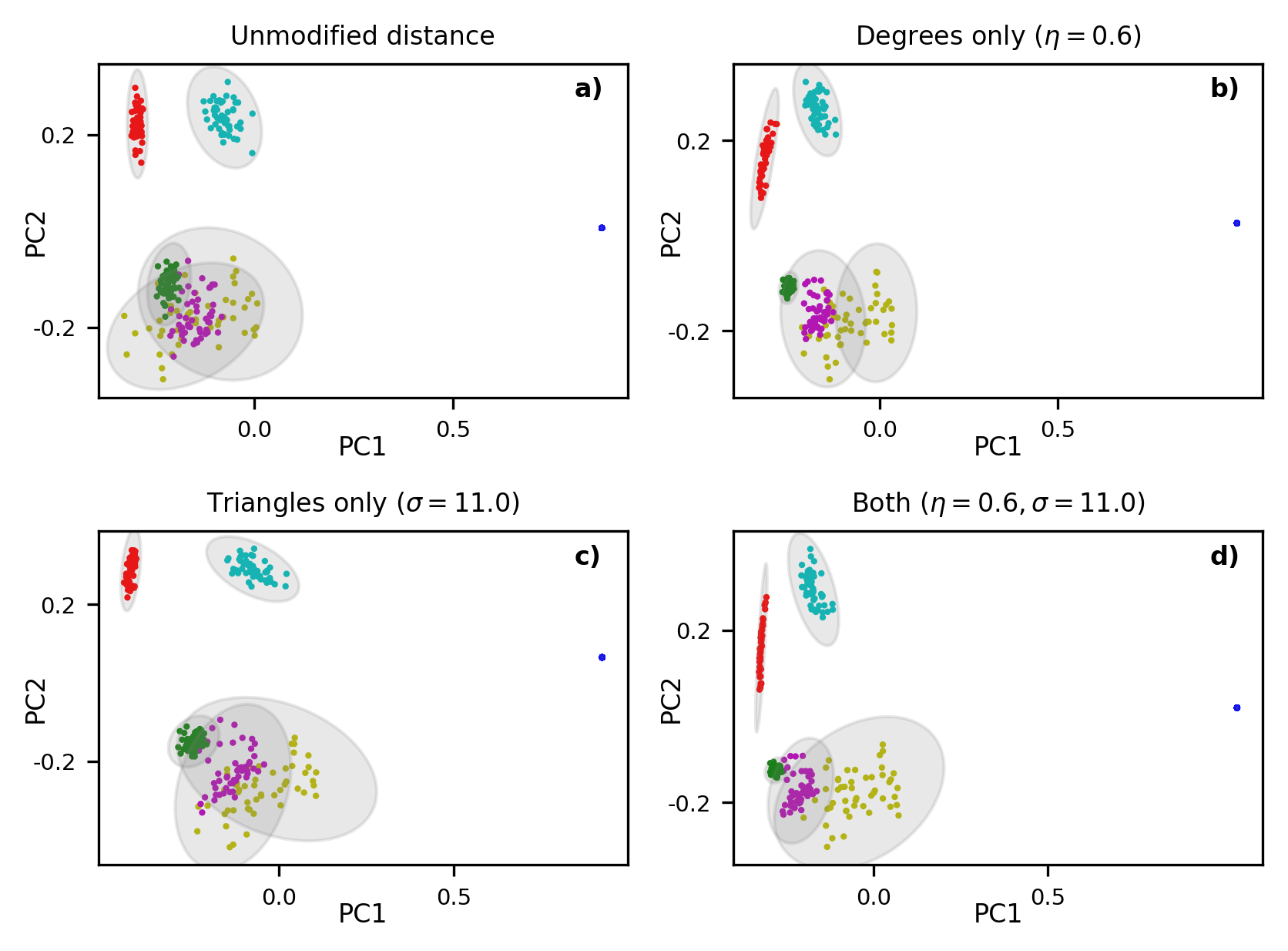}
\caption{\label{fig:em}
(Best viewed in color.) Using TNBSD to cluster random graphs. We compute the largest $r=200$ eigenvalues random graphs of six different models: Erd\"os-R\'enyi (yellow), Stochastic Kronecker Graph (magenta), Barab\'asi-Albert (green), Configuration Model with power law degree distribution and $\gamma=2.3$ (cyan), Watts-Strogatz (blue), and Hyperbolic Graphs with $\gamma=2.3$ (black). We generated 50 graphs per model. Visualized are the first two principal components of the data set after applying Kernel PCA with cosine similarity. The clusters are found with the Expectation Maximization algorithm optimizing a Gaussian mixture model. The ellipses (gray) are centered around the estimated clusters. We show results using the unmodified distance in (\textbf{a}),  modified to emphasize only degree distribution in (\textbf{b}), modified to emphasize only triangles in (\textbf{c}), and modified to emphasize both degree distribution and triangles in (\textbf{d}). All but four data points out of 300 are clustered correctly (accuracy 98.66\%) in \textbf{d}. All graphs have $n=5\times10^4$ nodes and average degree approximately $\langle k \rangle = 15$. See definitions of $\eta, \sigma$ in Sec.~\ref{sub:tuning}.}
\end{figure}

We wish to put this result in the context of other state-of-the-art graph distance methods. For example, the authors of \cite{bento2018family} claim their method is able to cluster certain random graphs models with no misclassifying errors when the sizes have $N=50$ nodes. The methods ORTHOP and ORTHFR in \cite{bento2018family} are a direct relaxation of the graph isomorphism problem based on the chemical distance, which tries to find a perfect node alignment between two graphs. Thus we expect them to perform quite well in this experiment.

In this context, we wish to study the performance of the TNBSD as the number of nodes varies. Can the TNBSD come close to ORTHOP and ORTHFR when clustering small graphs? More generally, how far is TNBSD from identifying the isomorphism class of (the 2-core of) a graph, which was the original promise of the theory of the length spectrum? For this purpose, we execute the same experimental setup as above but on increasingly smaller graphs and compare to an approximation of the graph isomorphism problem, namely ORTHOP and ORTHFR \cite{bento2018family}. See Fig.~\ref{fig:sizes} for results. TNBSD can achieve comparable performance to ORTHOP/ORTHFR at $N=5\times 10^4$, while achieving acceptable performance across all other graph sizes (when using fine-tuning parameters). We hypothesize that the drop in performance of TNBSD in smaller graphs is due to the fact that smaller graphs have fewer eigenvalues (each with smaller absolute value), which yields a less distinguishable pattern on the complex plane. However, when the graphs are larger, the increase in number of eigenvalues yields geometric patterns that are larger and can be distinguished (and fine-tuned) more easily.

\begin{figure}[t]
\centering
\includegraphics[scale=.5]{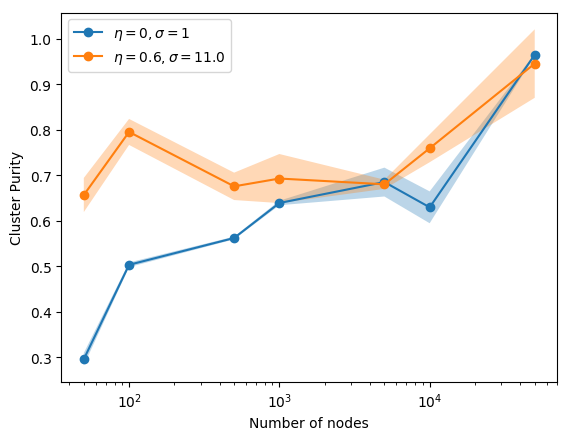}
\caption{\label{fig:sizes}
Performance of random graph clustering as number of nodes varies, using unmodified TNBSD (blue) and fine-tuned with same parameters as in Fig.~\ref{fig:em}d (orange, ). Authors of \cite{bento2018family} claim their methods (ORTHOP and ORTHFR) achieve 100\% cluster purity at $N=50$ nodes (not pictured).  A clustering algorithm that classifies graphs purely at random would yield 16.66\% purity. Variance in purity is due to stochasticity of random graph models and random initializations of the clustering algorithm. $\eta$ fine tunes to degree distribution, $\sigma$ fine tunes number of triangles. See Sec.~\ref{sub:tuning} for definitions.}
\end{figure}

\subsection{Case Study 2: Dissimilar Samples of the Same Graph}\label{sub:similar}
In this case study we take several samples of the same real life network with different sampling algorithms, and measure the distance between them with the purpose of determining which samples were taken with the same algorithm. In doing so we also show the visualization capabilities afforded by the non-backtracking eigenvalues (Fig.~\ref{fig:samples}).

For this experiment, we use the web graph of web pages belonging to Stanford University \cite{DBLP:journals/im/LeskovecLDM09}. This graph has $n=281903$ nodes and $m=2312497$ edges. We took two samples with each of the following sampling algorithms: node sampling (NS), edge sampling (ES), random walk sampling (RW), random walk sampling with jump (RJ) \cite{DBLP:journals/tkdd/AhmedNK13}, for a total of eight samples. The samples were taken from random seeds until a minimum  of $5\%$ of existing edges were observed. Jump probability for RJ was $p=0.3$. After visualizing the non-backtracking eigenvalues of each sample graph, we observe there are regions of the complex plane that are consistently occupied by only two of these samples at the same time. However, visualization of the eigenvalues on the complex plane (Fig.~\ref{fig:samples}a) or in a reduced space (Fig.~\ref{fig:samples}b) does not yield definitive answers. Therefore, we proceed to apply a statistical test to determine which samples were taken with the same algorithm. We assume that the underlying original network determines a continuous probability distribution over the complex plane with support set $\mathcal{S}$, and that each sampling algorithm determines a distinct probability distribution over $\mathcal{S}$. We assume, further, that the eigenvalues of each sample network are independent observations drawn from these distributions. Hence, to answer the question of which of those samples are taken from the same distribution, we use the Kolmogorov-Smirnov test on each pair of two samples under the null hypothesis that the samples come from the same distribution. (Fig.~\ref{fig:samples}c) shows that this test is capable of determining which samples are taken from the same algorithm.

\begin{figure}[t]
\centering
\includegraphics[width=.9\textwidth]{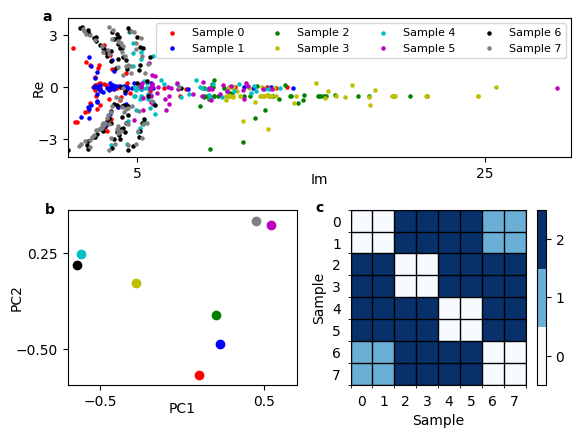}
\caption{\label{fig:samples}
(Best viewed in color.) Visualizing distinct samples of the same graph. Eight samples were taken of the same data set (see Sec.~\ref{sub:similar}). We plot the largest $r=200$ non-backtracking eigenvalues of each sample, one per color, \textbf{(a)}. Samples 2 and 3 (green and yellow) achieve similar behavior, as do Samples 0 and 1 (blue and red), Samples 6 and 7 (gray and black), and Samples 4 and 5 (cyan and magenta). We may thus posit that those are the pairs of samples that come from the same algorithm. However, when visualizing the principal components of each sample after applying Kernel PCA with cosine similarity, \textbf{(b)}, we do not get confirmation of this hypothesis. Hence, we rely on the Kolmogorov-Smirnov statistic. We assume that each sampling algorithm determines two probability distributions over the real numbers (one for the real axis and one for the imaginary axis). We test the hypothesis that each pair of samples comes from the same underlying distributions using the Kolmogorov-Smirnov test. In \textbf{(c)} we report the number of tests in which the null hypothesis is rejected for each pair of samples. Here we confirm that for those pairs of samples identified in \textbf{(a)}, the null cannot be rejected, while there is enough evidence to reject all other pairs. All tests performed at $90\%$ significance level and with Bonferroni correction for multiple comparisons of $m=14$ (each sample is compared to seven others twice --one for the distribution of real parts of the eigenvalues and one for the imaginary part).}
\end{figure}

\subsection{Case Study 3: Degree-Preserving rewiring}\label{sub:rewiring}
In this case study, we observe the performance of TNBSD in the presence of structural noise in the graph. The purpose is to elucidate the saliency of the structural properties detected by this distance measure, and to determine how robust they are when in the presence of noise.

The setup is as follows. We consider a graph $G$, and compute its non-backtracking eigenvalues. Then, generate an ensemble of random graphs with the configuration model that have the same degree distribution as $G$. We compute the average distance from $G$ to this ensemble. This distance represents the structural saliency that TNBSD is detecting. In other words, if $G$ is close to the random ensemble in terms of TNBSD, then its non-backtracking structure is not salient and this would be a counter-indication to the use of TNBSD. Moreover, we introduce structural noise to $G$ by performing degree-preserving randomization \cite{maslov2002specificity} on $G$, and measuring the distance between the rewired graph and $G$. By varying the rewiring parameter (i.e., the probability of rewiring an edge), we expect that the rewired versions of $G$ will move closer and closer to the random ensemble of configuration model graphs; thus, we expect the distance to increase from $0$ to the average distance to the random ensemble. See Fig.~\ref{fig:rewiring} for results on random graphs and the samples used in Case Study 2.

The results show that TNBSD is able to distinguish the original graph from noisy versions across a wide range of the rewiring parameter in several cases. However, both BA and KR are outliers since they are indistinguishable from other graphs with the same degree distribution, even with a small number of rewirings. This partly explains the observation in Case Study 1 that TNBSD was not able to satisfactorily distinguish between KR and BA graphs before fine-tuning. 

\begin{figure}[t]
\centering
\includegraphics[width=.9\textwidth]{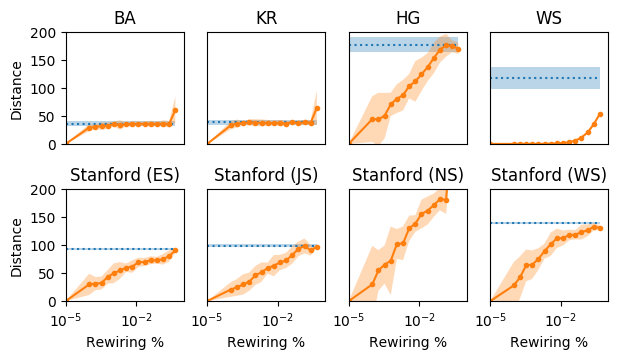}
\caption{\label{fig:rewiring}
Performance of TNBSD in the presence of structural noise (degree-preserving rewiring). Average distance between a graph and an ensemble of graphs with the same degree distribution (blue), and distance between the original graph and rewired versions, by percentage of rewired edges (orange). Shaded regions show two standard deviations around the mean. In most cases, TNBSD is able to distinguish between the original graph and noisy versions across several orders of magnitude of the rewiring parameter, in some cases even after 20\% of the edges have been rewired. BA and KR are indistinguishable from other graphs with the same degree distribution, which highlights the need to use fine-tuning in applications.}
\end{figure}

\subsection{Case Study 4: Enron data set}\label{sub:enron}
In this last case study we apply the TNBSD to the well-known Enron emails data set \cite{klimt2004introducing,agsm,guardian,NYT}. From it, we extract a who-emails-whom network, treat it as undirected and unweighted, and aggregate it both daily and weekly; see Fig.~\ref{fig:enron}. The purpose is to recover general common sense features of this data set, such as the periodicity of weekly communications, as well as perform anomaly detection: we expect to see anomalies in the distance measured between graphs of this data set whenever a major event in the Enron scandal occurred. We were able to recover both of these features (Fig.~\ref{fig:enron}).

\begin{figure}[t]
\centering
\includegraphics[width=.9\textwidth]{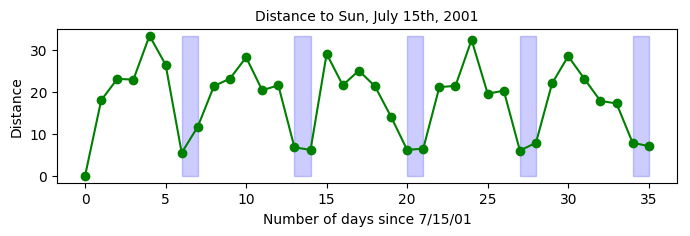}
\includegraphics[width=.9\textwidth]{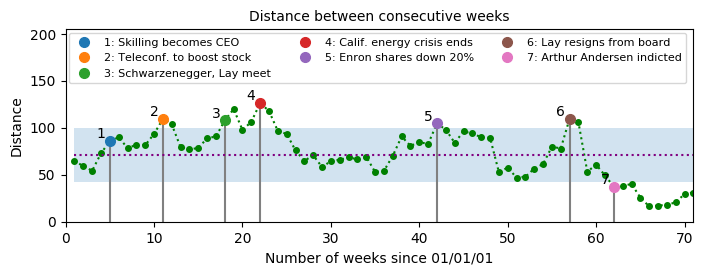}
\caption{\label{fig:enron}
(Best viewed in color.) Applying TNBSD to the Enron data set. \textbf{Top:} Data aggregated into daily graphs and compared to Sunday, July 15th, 2001. The periodicity of weekly communications is recovered; that is, graphs corresponding to Saturdays and Sundays are closer to each other than they are to weekdays. \textbf{Bottom:} Data aggregated into weekly graphs and compared to the previous week. We highlight the mean distance (dashed line) and one standard deviation around it (shaded area). Each week that falls outside of the shaded area coincides with a known event during the Enron scandal and subsequent collapse.}
\end{figure}

\section{Conclusions}\label{sec:conclusions}
In this work, we have focused on the problem of deriving a notion of graph distance for complex networks based on the length spectrum function. We add to the repertoire of distance measures \cite{soundarajan2014sdm,koutra2013deltacon,bagrow2018information,bento2018family,PhysRevE.86.036104,schieber2017quantification,chowdhury2017distances,chowdhury2018metric} the Truncated Non-Backtracking Spectral Distance (\textbf{TNBSD}): a principled, interpretable, efficient, and effective measure that takes advantage of the fact that the non-backtracking cycles of a graph can be interpreted as its free homotopy classes. \textbf{TNBSD} is principled because it is backed by the theory of the length spectrum, which characterizes the 2-core of a graph up to isomorphism; it is interpretable because we can study its behavior in the presence of structural features such as hubs and triangles, and we can use the resulting geometric features of the eigenvalue distribution to our advantage; it is efficient because it takes no more time than computing a few of the largest eigenvalues of the non-backtracking matrix; and we have presented extensive experimental evidence to show that it is effective at discriminating between complex networks in various contexts, including visualization, clustering, sampling, and anomaly detection.

\paragraph{Limitations} There are two major limitations of \textbf{TNBSD}. First, it relies on the assumption that the non-backtracking cycles contain enough information about the network. Concretely, the usefulness of the \textbf{TNBSD} will decay as the 2-core of the graph gets smaller. For example, trees have an empty 2-core, and all of its non-backtracking eigenvalues are equal to zero. In order to compare trees, and more generally, those nodes outside the 2-core of the graph, the authors of \cite{durfee2015distinguishing} propose several different strategies, for example adding a ``cone node" that connects to every other node in the graph. However, a broad class of complex networks will not look like trees.  The utility of \textbf{TNBSD} on this class of networks was extensively showcased in Sec.~\ref{sec:experiments}. Second, definition~\ref{def:tnbsd} is only one possible way to solve the problems of how to compute and how to compare the length spectrum function. One point of possible improvement is how we choose which eigenvalues are compared to which others. Currently, we sort the eigenvalues by magnitude in order to compare them, but this may not be the best setup for comparison, especially because there are usually many eigenvalues with approximately the same magnitude. Indeed, we have already hinted at a possible solution to this problem when we applied  \textbf{TNBSD}, not to the eigenvalues themselves, but to their projection on the space of principal components after performing Kernel PCA with cosine similarity (Sections~\ref{sub:cluster} and \ref{sub:similar}).

\paragraph{Future work} There are many other avenues to explore in relation to how to exploit the information stored in the length spectrum and the fundamental group. As mentioned in Sec.~\ref{sec:relaxation}, the major downside of the relaxed length spectrum $\mathcal{L}'$ is the fact that we lose information stored in the combinatorics of the fundamental group. That is, $\mathcal{L}'$ stores information of the frequency of lengths of free homotopy classes, but no information on their concatenation -- i.e., the group operation in $\pi_1(G)$. One way to encapsulate this information is by taking into account not only the frequency of each possible length of non-backtracking cycles, but also the number of non-backtracking cycles of fixed lengths $\ell_1$ and $\ell_2$ that can be concatenated to form a non-backtracking cycle of length $\ell_3$. It remains an open question how to compute this information efficiently using the non-backtracking matrix for all values of the parameters $\ell_1, \ell_2, \ell_3$, which range freely on the positive integers.

We conclude by mentioning that we hope this work paves the road for more research along the lines of topological and geometric data analysis of complex networks focusing on introducing and exploiting novel theoretical concepts such as the length spectrum function and the fundamental group.

\section*{Acknowledgments}
We thank Evimaria Terzi for her contributions to an earlier version of this work. Torres and Eliassi-Rad were supported by NSF CNS-1314603 and NSF IIS-1741197. Su\'arez-Serrato was supported by UC-MEXUS (University of California Institute for Mexico and the United States) CN-16-43, DGAPA-UNAM PAPIIT IN102716, and DGAPA-UNAM PASPA program.


\end{document}